\newcommand{\e}{\epsilon}
\newcommand{\be}{\begin{equation}}
\newcommand{\ee}{\end{equation}}
\newcommand{\ba}{\begin{eqnarray}}
\newcommand{\ea}{\end{eqnarray}}
\newtheorem{prop}{Proposition}
\title{Brief Article}
\begin{document}
\centerline{\textbf{All transverse and TT tensors in  flat spaces of any dimension}}

\date{\today}

\null

\centerline{ J Tafel} 
\noindent
{Institute of Theoretical Physics, University of Warsaw, Poland}

\begin{abstract}
We present general formulas for transverse and transverse-traceless (TT) symmetric  tensors in flat spaces. TT tensors in conformally flat spaces can be obtained by means of a conformal transformation.

\end{abstract}

\null

\noindent
 Keywords: transverse tensors, TT tensors, conformal method

\null

%\noindent
PACS numbers: 4.20.Ex

\section{Introduction}
A symmetric tensor $T^{ij}$  is called transverse if it satisfies 
\be
\nabla_iT^{ij}=0\label{1}
\ee
where $\nabla_i$ denotes the covariant derivative with respect to metric $g_{ij}$ of a $D$-dimensional space. Condition (\ref{1})
occurs in general relativity as an analog of the conservation law for energy and momentum, as a gauge condition for the metric tensor or as the momentum constraint in the initial data problem for the vacuum Einstein equations. In the last case one often assumes that, additionally,
\be\label{2}
g_{ij}T^{ij}=0\ .
\ee
 If (\ref{1}) and  (\ref{2}) are satisfied then the tensor $T$ is called  transverse-traceless (TT). TT tensors were introduced by Arnowitt, Deser and Misner in 1959 \cite{adm}.

Equation (\ref{1}) is particularly simple if $\nabla_i$ is the covariant derivative corresponding to a flat connection. Even then its solutions are interesting, especially if   (\ref{2}) is satisfied, since they can be used to construct initial data with a conformally flat metric by means of  the Lichnerowicz-York method \cite{l,y}. This method is based on the invariance of equations (\ref{1})-(\ref{2}) under the transformation
\begin{equation}\label{1a}
g'_{ij}=\psi^{\frac{4}{D-2}} g_{ij},\quad T'_{ij}=\psi^{-2} T_{ij}
\end{equation}
with a nonvanishing function $\psi$. Examples of data obtained in this way can be found in classical papers by Misner \cite{m}, Brill and Lindquist \cite{bl}, Bowen and York \cite{by} and Brandt and Br\"{u}gmann \cite{bb}.

In this article we present general solution of equations
\be
T^{ij}_{\ ,j}=0\ ,\label{3}
\ee
 with or without condition  (\ref{2}) with constant metric $g_{ij}$. They are expressed in terms of potentials which undergo gauge transformations. From the  traceless solutions   we can obtain all TT tensors in conformally flat spaces by means of transformation (\ref{1a}).

 For obvious reasons we assume $D\geq 2$. All results are based on the Poincare lemma about local exactness of closed forms. For our purposes it is convenient to use the following form of this lemma
 \be\label{0}
 \alpha^{Ii_1..i_p}_{\ \ \ \ \ ,i_p}=0\ \ \Rightarrow\ \ \alpha^{Ii_1..i_p}=\beta^{Ii_1..i_{p+1}}_{\ \ \ \ \ \ ,i_{p+1}}\ ,
 \ee
where tensors $\alpha$ and $\beta$ are completely antisymmetric in indices $i_1,i_2,...$ and  $I$ denotes a set of additional indices. Integer $p$ satisfies $1\leq p\leq D-1$.
\section{Solutions of $T^{ij}_{\ ,j}=0$}
\begin{prop}\label{p1}
 All solutions of equation (\ref{3}) have the form
 \be\label{4}
 T^{ij}=R^{ikjp}_{\ \ \ \ ,kp}\ ,
 \ee
 where $R^{ikjp}$ is a tensor  possessing all algebraic symmetries of the Riemann tensor.
  \end{prop}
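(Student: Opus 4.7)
The proposition has two parts: (a) every tensor of the form (\ref{4}) is a symmetric solution of (\ref{3}), and (b) every symmetric solution has this form. Direction (a) is immediate: the pair-exchange symmetry $R^{ikjp}=R^{jpik}$ yields $T^{ij}=T^{ji}$, and the antisymmetry of $R$ in $(j,p)$ combined with the Schwarz symmetry of $\p_j\p_p$ gives $T^{ij}_{\ \ ,j}=0$. The substance of the proposition is direction (b), which I plan to prove by two applications of the Poincare lemma (\ref{0}).

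First, treat the index $i$ of $T^{ij}$ as a spectator and regard (\ref{3}) as the closure condition on the ``1-form in $j$'' $T^{ij}$. Lemma (\ref{0}) with $p=1$ produces a tensor $S^{ijk}$, antisymmetric in $(j,k)$, with
\begin{equation*}
T^{ij}=S^{ijk}_{\ \ \ ,k}.
\end{equation*}
Next, the symmetry $T^{ij}=T^{ji}$ rewrites as the closure $(S^{ijk}-S^{jik})_{,k}=0$ in the index $k$, with the antisymmetric pair $(i,j)$ as spectator. A second appeal to (\ref{0}), followed by a free antisymmetrisation of the output over $(i,j)$, yields a tensor $V^{ijkl}$, antisymmetric in both $(i,j)$ and $(k,l)$, such that
\begin{equation*}
S^{ijk}-S^{jik}=V^{ijkl}_{\ \ \ \ ,l}.
\end{equation*}

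The remaining step is to assemble $S^{ijk}$ and $V^{ijkl}$ into a single tensor $R^{ikjp}$ with the full Riemann symmetries (antisymmetry in $(i,k)$, antisymmetry in $(j,p)$, pair-exchange $R^{ikjp}=R^{jpik}$, and the first Bianchi identity $R^{i[kjp]}=0$) satisfying $R^{ikjp}_{\ \ \ \ ,kp}=T^{ij}$. A natural attempt is to begin with a provisional potential built from partial derivatives of $S$ symmetrised in the required pairs, and then to add appropriate derivatives of $V$ to correct the residual symmetry defects, before Young-projecting onto the Riemann class.

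I expect this final assembly to be the main technical obstacle, since four independent algebraic conditions on $R$ must be imposed simultaneously without disturbing the divergence identity. The mechanism that should make it work is twofold. First, any part of a candidate $R^{ikjp}$ lying outside the Riemann Young class is annihilated by the double divergence $\p_k\p_p$ (the operator $\p_k\p_p$ is symmetric in $(k,p)$, so components in the ``wrong'' Young slots cancel against the antisymmetries required of a Riemann tensor), and such pieces may therefore be discarded for free. Second, the tensor $V^{ijkl}$ encodes precisely the data needed to absorb the failure of $S^{ijk}$ to be symmetric in $(i,j)$ into a genuine Riemann-symmetric piece; combined with a Young symmetrisation this should deliver the required $R^{ikjp}$.
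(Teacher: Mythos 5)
Your setup coincides with the paper's: two applications of the Poincare lemma produce exactly the potentials $S^{ijk}$ (antisymmetric in its last two indices, with $T^{ij}=S^{ijk}_{\ \ \ ,k}$) and $V^{ijkp}$ (antisymmetric in both pairs, with $S^{[ij]k}=V^{ijkp}_{\ \ \ \ ,p}$) that the paper uses. But your argument stops precisely where the real work begins: you declare the assembly of $R^{ikjp}$ to be ``the main technical obstacle'' and offer only a heuristic for why it should succeed. The missing step is a concrete algebraic identity, not a Young projection. Because $S^{ijk}=-S^{ikj}$, the antisymmetric part $S^{[ij]k}$ determines the \emph{whole} tensor through
\[
S^{ijk}=S^{[ij]k}-S^{[ik]j}-S^{[jk]i}
\]
(the same permutation trick that recovers Christoffel symbols from derivatives of the metric). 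Substituting $S^{[ij]k}=V^{ijkp}_{\ \ \ \ ,p}$ into this identity expresses all of $S$, not merely its antisymmetric part, as a first derivative of $V$, whence $T^{ij}=-(V^{ikjp}+V^{jpik})_{,kp}$. The potential is then defined algebraically from $V$ itself, $R^{ikjp}=-V^{ikjp}-V^{jpik}+2V^{[ikjp]}$; in particular it is \emph{not} built ``from partial derivatives of $S$'' as you propose --- it must sit one derivative order above $S$, i.e.\ at the level of $V$, so that $R^{ikjp}_{\ \ \ \ ,kp}$ lands at the order of $S^{ijk}_{\ \ \ ,k}$.

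Two further points. First, your claimed mechanism --- that any part of a candidate $R^{ikjp}$ outside the Riemann Young class is annihilated by $\partial_k\partial_p$ --- is false: a totally symmetric $R^{ikjp}$ lies outside that class yet contributes fully to the double divergence. What is true, and what the paper actually uses, is the much weaker statement that a \emph{totally antisymmetric} piece is annihilated (being antisymmetric in $k,p$ it dies against the symmetric operator $\partial_k\partial_p$). This weaker fact suffices because $-V^{ikjp}-V^{jpik}$ already possesses both pair antisymmetries and the pair-exchange symmetry; its only defect is the first Bianchi identity, whose failure is measured by the totally antisymmetric part and is therefore removable for free by adding $2V^{[ikjp]}$. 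Second, without the permutation identity above you control only $S^{[ij]k}$ in terms of $V$; the symmetric part $S^{(ij)k}$ would remain an unconstrained ingredient of $T^{ij}=S^{ijk}_{\ \ \ ,k}$, and no Young projection applied to a candidate $R$ can recover it. So the gap is genuine: the proposal correctly sets up the two potentials but omits the identity that closes the argument.
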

\begin{proof}
Let us apply the Poincare lemma (\ref{0}) with $p=1$ and $I=i$ to equation (\ref{3}). Hence, there exists tensor $S^{ijk}$ such that 
\be\label{5}
T^{ij}=S^{ijk}_{\ \ \ ,k}\ ,\ \ S^{ikj}=-S^{ijk}\ .
\ee
Condition $T^{ij}=T^{ji}$ yields
\be
S^{[ij]k}_{\ \ \ \   ,k}=0\ .
\ee
Taking into account  (\ref{0}) with $I=[ij]$ we obtain potentials $V^{ijkp}$ such that 
\be
S^{[ij]k}=V^{ijkp}_{\ \ \ \ ,p}\label{6}
\ee
and
\be
V^{ijkp}=-V^{jikp}=-V^{ijpk}\ .
\ee
A combination of equations (\ref{6}) for different permutations of indices $ijk$ leads to
\be\label{8}
S^{ijk}=(V^{ijkp}-V^{ikjp}-V^{jkip})_{,p}\ .
\ee
Substituting (\ref{8}) to (\ref{5}) yields
\be\label{9}
T^{ij}=-(V^{ikjp}+V^{jpik})_{,kp}\ .
\ee
Let $R^{ikjp}$ be defined by 
$$R^{ikjp}=-V^{ikjp}-V^{jpik}+2V^{[ikjp]}\,$$
where the square bracket denotes antisymmetrization over all indices inside the bracket. Then (\ref{9}) takes the form
 (\ref{4}). Tensor $R^{ijkp}$ has algebraic symmetries of the Riemann tensor. It is antisymmetric in the first and  second pair of indices, symmetric with respect to interchange of these pairs and $R^{[ijkp]}=0$ (hence also $R^{i[jkp]}=0$).

\end{proof}
For $D=2$ one has
\be\label{9a}
R^{ikjp}=\frac 12R\epsilon^{ik}\epsilon^{jp}\ ,
\ee
where $R$ is arbitrary function and $\e^{ij}$ is the standard completely antisymmetric tensor. 
It follows from (\ref{4}) that
\be\label{9b}
 T^{ij}=\epsilon^{ik}\epsilon^{jp}R_{,kp}\ .
 \ee
 For $D\geq 3$ tensor $R^{ijkp}$ contains more components than number of degrees of freedom admitted by equation (\ref{3}).  Arbitrariness in choice of $R^{ijkp}$ describes the following proposition.
\begin{prop}\label{p2}
For $D=2$ function  $R$ is given up to the tranlation\\
$R\longrightarrow R+R'$, where
\be\label{9a}
R'=a_ix^i+b\ ,\ \ \ a_i,b=const\ .
\ee
For $D\geq 3$ tensor $R^{ikjp}$ is given up to the translation by
 \be\label{10}
 R'^{ikjp}=(\frac 12\xi^{ikjpr}+\frac 12\xi^{jpikr}-\xi^{[ikjpr]})_{,r}\ ,\ \ \ \xi^{ikjpr}=\xi^{[ik][jpr]}\ .
 \ee
 \end{prop}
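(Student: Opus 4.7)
My approach is to characterize, in each case, the kernel of the potential-to-tensor map by iterated use of the Poincar\'e lemma (\ref{0}), mirroring the proof of Proposition \ref{p1}.

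For $D=2$ the argument is immediate from formula (\ref{9b}): since $\epsilon^{ij}$ is non-degenerate, the shift $R\to R+R'$ preserves $T^{ij}=\epsilon^{ik}\epsilon^{jp}R_{,kp}$ iff $R'_{,kp}=0$ for all $k,p$, and the general solution of this system on $\mathbb{R}^2$ is the affine function $R'=a_ix^i+b$ with constant coefficients $a_i,b$.

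For $D\geq 3$ the easy direction is to verify that every $R'^{ikjp}$ of the form (\ref{10}) has Riemann symmetries and $R'^{ikjp}_{,kp}=0$. The $(i,k)$ and $(j,p)$ antisymmetries are inherited from the $[ik][jpr]$ pattern of $\xi$; the pair-swap symmetry $R'^{ikjp}=R'^{jpik}$ holds because the first two terms of (\ref{10}) interchange under this swap while the third is fully antisymmetric and hence invariant under the even permutation $(i\,j)(k\,p)$; the algebraic Bianchi identity follows from a short antisymmetrization. The double divergence vanishes because in each term the derivatives $\partial_r\partial_k\partial_p$ contract a pair of antisymmetric indices of $\xi$ against a pair of commuting partial derivatives (for instance $(p,r)$ against $\partial_p\partial_r$ in the first term), which is zero identically.

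The substantive direction is the converse. Given $R'^{ikjp}$ with Riemann symmetries and $R'^{ikjp}_{,kp}=0$, I would set $W^{ijp}=R'^{ikjp}_{,k}$; this is antisymmetric in $(j,p)$ with $W^{ijp}_{,p}=0$, so (\ref{0}) produces $\zeta^{ijpr}$ with $\zeta^{i[jpr]}=\zeta^{ijpr}$ such that $R'^{ikjp}_{,k}=\zeta^{ijpr}_{,r}$. Parallel treatment of $B^{ikj}=R'^{ikjp}_{,p}$ yields $\eta^{jikr}$ with $\eta^{j[ikr]}=\eta^{jikr}$ and $R'^{ikjp}_{,p}=\eta^{jikr}_{,r}$. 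Lifting these divergence identities to an identity for $R'^{ikjp}$ itself requires forming linear combinations of $\zeta,\eta$ and their index permutations whose relevant divergences vanish, then applying (\ref{0}) a second time to extract a five-index potential; this second step needs three antisymmetric slots, which is exactly where the hypothesis $D\geq 3$ enters.

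The main obstacle is the bookkeeping of this second step. The raw potentials produced by (\ref{0}) do not automatically possess the $[ik][jpr]$ symmetry demanded of $\xi$, nor does their naive reassembly automatically produce an $R'^{ikjp}$ with Riemann symmetries; both defects must be cured by explicit symmetrization in direct analogy with the passage $V^{ijkp}\mapsto R^{ikjp}$ at the end of the proof of Proposition \ref{p1}. The particular three-term combination appearing in (\ref{10}), with coefficient $2$ on the fully antisymmetric part, is precisely what that bookkeeping produces.
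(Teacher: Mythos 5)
Your $D=2$ argument and your verification of the easy direction for $D\geq 3$ are fine (the paper omits the latter), and you correctly locate where $D\geq 3$ enters: the second application of (\ref{0}) produces a potential with three antisymmetric slots, which requires $p=2\leq D-1$. Your first step in the converse direction also coincides with the paper's: from $(R'^{ikjp}_{\ \ \ \ ,k})_{,p}=0$ and (\ref{0}) one gets $R'^{ikjp}_{\ \ \ \ ,k}=\tfrac12 V^{ijpk}_{\ \ \ \ ,k}$ with $V^{ijpk}=V^{i[jpk]}$.

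However, the converse is exactly where your proposal stops short, and the step you defer as ``bookkeeping'' is the one genuinely nontrivial idea of the proof, not a routine symmetrization. Your plan of ``forming linear combinations of $\zeta,\eta$ and their index permutations whose relevant divergences vanish'' cannot succeed as stated, for two reasons. First, the parallel potential $\eta$ obtained from $R'^{ikjp}_{\ \ \ \ ,p}$ carries no new information: by the pair-swap symmetry this divergence is just a relabelling of the first one. Second, and more importantly, a combination built only from the potentials $\zeta,\eta$ can never be the object to which (\ref{0}) is applied a second time, because the conclusion must produce $R'$ itself on the left-hand side. The paper's key move is to antisymmetrize $R'^{ikjp}_{\ \ \ \ ,k}=\tfrac12 V^{ijpk}_{\ \ \ \ ,k}$ over $i,j$ and invoke the algebraic Bianchi identity $R'^{i[kjp]}=0$ together with the pair symmetry to convert the divergence over the \emph{second} slot into a divergence over the \emph{last} slot, yielding $(R'^{ijpk}+V^{[ij]pk})_{,k}=0$. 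Only because $R'$ now appears undifferentiated inside a quantity that is antisymmetric in its last two indices and divergence-free in the last one can (\ref{0}) with $I=[ij]$ be applied to give $R'^{ijpk}+V^{[ij]pk}=\xi^{ijpkr}_{\ \ \ \ \ ,r}$. The remaining work (antisymmetrizing over $ijp$ to get $V^{[ijp]k}=\xi^{[ijp]kr}_{\ \ \ \ \ \ ,r}$ and reconstructing $V^{ijpk}=(3\xi^{[jpk]ir}+4\xi^{[ijpk]r})_{,r}$ from its antisymmetrizations) is indeed bookkeeping, but it rests entirely on first having the Bianchi step. Since your proposal neither identifies this step nor supplies a substitute for it, the converse direction remains unproved.
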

\begin{proof}
 For  $D=2$ an addition to $R$ does not change  $T^{ij}$ if all its second derivatives vanish. Thus, it has to be linear in all coordinates. 

In order to show that for  $D\geq 3$ all solutions of the equation
\be\label{11}
 R'^{ikjp}_{\ \ \ \ ,kp}=0\ ,
 \ee
 with symmetries of the Riemann tensor, have the form  (\ref{10}) let us write (\ref{11}) as
 \be\label{12}
 (R'^{ikjp}_{\ \ \ \ ,k})_{,p}=0\ .
 \ee
 From the Poincare lemma (\ref{0}) with $I=i$ one obtains
 \be\label{12}
 R'^{ikjp}_{\ \ \ \ ,k}=\frac 12V^{ijpk}_{\ \ \ \ ,k}\ ,
 \ee
 where tensor $V^{ijpk}$ is antisymmetric in the last 3 indices, $V^{ijpk}=V^{i[jpk]}$. 
Thanks to  $R'^{[ijk]p}=0$ an antisymmetrization of (\ref{12}) over the indices $ij$ leads  to
\be\label{13}
 (R'^{ijpk}+V^{[ij]pk})_{,k}=0\ .
 \ee
 Using again (\ref{0}), now with $I=[ij]$, we obtain
 \be\label{14}
 R'^{ijpk}+V^{[ij]pk}=\xi^{ijpkr}_{\ \ \ \ \ ,r}\ ,
 \ee
 where $\xi^{ipjkr}=\xi^{[ip][jkr]}$.
 Antisymmetrization of (\ref{14}) over $ijp$ yields
 \be\label{15}
 V^{[ijp]k}=\xi^{[ijp]kr}_{\ \ \ \ \ \ \ ,r}\ .
 \ee
 Taking a suitable combination of equations (\ref{15}) with different permutations of indices allows to express   tensor $V$ in the form
 \be\label{16}
 V^{ijpk}=(3\xi^{[jpk]ir}+4\xi^{[ijpk]r})_{,r}\ .
 \ee
 Substituting (\ref{16}) to (\ref{14}) leads to (\ref{10}).
 
 \end{proof}
  In dimension $D=2$ gauge transformations (\ref{9a})  can be used to remove a constant term and terms linear in coordinates in an expansion of $R$ around a fixed point. In higher dimensions tensor $R^{ikjp}$ can be decomposed into a traceless part  $C^{ikjp}$, corresponding to the Weyl tensor in general relativity, and the rest which is defined  by an analog of the Ricci tensor $R_{kp}=R^i_{\ kip}$
  \be\label{16a}
  R^{ik}_{\ \ jp}=C^{ik}_{\ \ jp}+4a \delta^{[i}_{\ [j}R^{k]}_{\ p]}-2b R\delta^{[i}_{\ [j}\delta^{k]}_{\ p]}\ .
\ee  
Here $R=R^i_{\ i}$ and
\be\label{16b}
a=\frac{1}{D-2}\ ,\ \ b=\frac{1}{(D-1)(D-2)}\ .
\ee
Tensor $C^{ikjp}$ vanishes identically for $D=3$. For $D\geq 4$ we can try to achieve this condition by means of the gauge transformation  (\ref{10}).
\begin{prop}\label{p3}
 In dimension $D=3$ every solution and in $D\geq 4$ every analytic solution of  (\ref{3}) has the form
 \be\label{16c}
 T_{ij}=a\bigtriangleup R_{ij}-2a R^k_{(i,j)k}+bR_{,ij}+(aR^{kp}_{\ \ ,kp}-b\bigtriangleup R)g_{ij}\ ,
 \ee
 where $R_{ij}$ is an arbitrary symmetric tensor undergoing the gauge transformation
  \be\label{16d}
 R_{ij}\longrightarrow R_{ij}+\xi_{(i,j)}-\xi^k_{\ ,k} g_{ij}
 \ee
 with arbitrary functions  $\xi_i$.
\end{prop}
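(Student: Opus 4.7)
The plan is to combine Proposition \ref{p1} with the algebraic decomposition (\ref{16a}) of a Riemann-symmetric tensor into its Weyl and Ricci-type parts, and then use the gauge freedom (\ref{10}) of Proposition \ref{p2} to kill the Weyl contribution.

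By Proposition \ref{p1} every solution has the form $T^{ij}=R^{ikjp}_{\ \ \ \ ,kp}$ with $R^{ikjp}$ sharing the symmetries of the Riemann tensor. Using (\ref{16a}), I split $R^{ikjp}$ into $C^{ikjp}$ and the Ricci-type piece built from $R_{ij}$ and $R$. The first step is to verify that, once $C^{ikjp}$ is set to zero, substituting the remaining part into (\ref{4}) and distributing the two derivatives with the constant metric produces exactly the right-hand side of (\ref{16c}). This is a routine purely algebraic computation using $g_{ij}=\mathrm{const}$, the symmetry of $R_{ij}$ and the commutativity of partial derivatives; the coefficients $a,b$ of (\ref{16b}) fall out with the correct values.

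In dimension $D=3$ the Weyl tensor vanishes identically, so this first step already gives (\ref{16c}). For $D\geq 4$ one must first dispose of $C^{ikjp}$, and this is where analyticity enters. Projecting (\ref{10}) onto the traceless block of $R^{ikjp}$ yields a linear system of partial differential equations for the Weyl-type component of the potential $\xi^{ikjpr}$ whose source is $-C^{ikjp}$. Written along a non-characteristic direction this system is of Cauchy--Kovalevskaya type, hence admits a local analytic solution whenever $C^{ikjp}$ is analytic. I expect this to be the main technical obstacle and the precise reason why the analyticity assumption is needed in dimensions $D\geq 4$.

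It remains to identify the residual gauge freedom in $R_{ij}$. The stabilizer of the condition $C^{ikjp}=0$ inside the gauge group (\ref{10}) is controlled by the trace pieces of $\xi^{ikjpr}$: contracting indices in (\ref{10}) and matching the Ricci-type block on both sides, one finds that the remaining $\xi$'s are parameterized by an arbitrary vector field $\xi^i$. Tracing the resulting $R'^{ikjp}$ to read off the induced $R'_{ij}$ gives, after a short calculation, the transformation (\ref{16d}); the specific combination $\xi_{(i,j)}-\xi^k_{\ ,k}g_{ij}$ arises from the antisymmetrizations already built into (\ref{10}) together with the symmetry weights in (\ref{16a}). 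This completes the plan.
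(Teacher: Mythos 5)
Your first step and the $D=3$ case coincide with the paper's argument: substitute the decomposition (\ref{16a}) with $C^{ikjp}=0$ into (\ref{4}) and check that the constant-metric computation reproduces (\ref{16c}) with the stated coefficients $a,b$. The divergence is in $D\geq 4$, and there your proof has a genuine gap. You propose to realize the gauge $C^{ikjp}=0$ by solving the first-order system obtained by projecting the gauge transformation (\ref{10}) onto the Weyl block, with source $-C^{ikjp}$, and you simply assert that this system is of Cauchy--Kovalevskaya type. That assertion is the entire difficulty and you have not earned it: the Weyl projection of the divergence of $\xi^{ikjpr}$ gives an overdetermined structure, so before Cauchy--Kovalevskaya applies you must identify which components of $\xi^{ikjpr}{}_{,1}$ the equations determine, isolate the equations containing no $x^1$-derivative as constraints, and prove those constraints propagate. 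None of this is done, and it is not obvious; the paper explicitly flags this route as the harder one and avoids it. Instead the paper fixes the given transverse $T^{ij}$ and proves solvability of the second-order system (\ref{16c}) for $R_{ij}$: the components with indices $11$ and $1a$ are constraints for the $x^1$-evolution, their propagation follows from $T^{ij}{}_{,j}=0$, and the Cauchy--Kowalewska theorem then yields an analytic $R_{ij}$. If you want to keep your route, you must supply the analogous constraint analysis for your first-order system.

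Your treatment of the residual gauge freedom also overreaches. You claim to derive (\ref{16d}) as the exact stabilizer of the condition $C^{ikjp}=0$ inside the gauge group (\ref{10}) by ``a short calculation''; that calculation is not exhibited, and it would have to resolve precisely the question the paper leaves open (the paper states that the determination of all transformations (\ref{10}) preserving $C^{ikjp}=0$ ``hasn't been done''). What the paper actually establishes, and what the proposition needs, is weaker: (\ref{16d}) preserves the right-hand side of (\ref{16c}) --- this is a direct check, motivated by the identification of (\ref{16c}) with the linearized Einstein equations --- and a count of components shows these transformations are general up to functions of $D-1$ variables. Either carry out your stabilizer computation in full or retreat to the verification the paper gives.
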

\begin{proof}
 Equation (\ref{16c}) follows  if (\ref{16a}) with $C^{ikjp}=0$ is substituted to (\ref{4}). In order to prove that for  $D\geq 4$ condition $C^{ikjp}=0$ is available one could look for an appropriate  gauge transformation of the form given by Proposition 2. In our opinion it is easier to  prove solvability of  (\ref{16c}) with respect to $R_{ij}$ if a transverse tensor $T^{ij}$ is given.
  Let us distinguish coordinate $x^1$, which together with  $x^a$,  $a=2,...D$, composes a Cartesian system of coordinates. Concerning evolution of $R_{ij}$  with respect to $x^1$ equations (\ref{16c}) with indices $11$ and $1a$ are constraints since they do not contain second derivatives of  $R_{ij}$ over $x^1$. It is easy to show that they are preserved in  $x^1$ if they are satisfied at  $x^1=0$ and remaining equations  (\ref{16c}) are fulfilled. If functions $T_{ij}$ are analytic then  the constraints at $x^1$ admit solutions and from the  Cauchy-Kowalewska theorem we obtain analytic solutions  $R_{ij}$ of all equations   (\ref{16c}). This situation is similar to that in general relativity. Equations (\ref{16c}) are identical with the linearized  Einstein equations if $R_{ij}-\frac{1}{D-1}Rg_{ij}$ is identified with the first corrections to the constant metric  $g_{ij}$. This analogy suggests 
  the gauge transformations (\ref{16d}).  It is easy to show that they preserve the rhs of (\ref{16c}). Counting number of components of  $R_{ij}$ and $\xi_i$ we can be sure that transformations  (\ref{16d}) are general up to functions of  $D-1$ variables. In order to exclude such additional functions one should find all transformations  (\ref{10}) preserving $C^{ikjp}=0$ and investigate their efect on $R_{ij}$. It hasn't been done.
 
\end{proof}
\noindent
\textbf{Remark:}
If  $g_{ij}$ has the  Lorentzian signature then equations (\ref{16c}) for $R_{ij}$ are hyperbolic and the assumption about analyticity in Proposition \ref{p3} is not necessary.

\section{TT tensors} 
For $T^{ij}$ given by expression (\ref{4}) the traceless condition (\ref{2})  leads to the following equation for the ``Ricci tensor''  $R^{ij}=R^{k}_{\ ikj}$
 \be\label{17}
 R^{kp}_{\ \ ,kp}=0\ .
 \ee
 Equation (\ref{17}) can be easily solved in terms of potentials.
  \begin{prop}\label{p4}
 For $D=2$ TT tensors are given by  (\ref{9b}),
 where $R$ satisfies
 \be\label{16b}
 \bigtriangleup R=0\ .
 \ee
  For $D\geq 3$ TT tensors are given by (\ref{4}) and
 \be\label{21}
 R^{pk}= S^{(pk)r}_{\ \ \ \ ,r}\ ,
 \ee
 where
 \be\label{19a}
 S^{pkr}=-S^{prk}\ .
 \ee
 \end{prop}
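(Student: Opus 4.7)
The plan is to start from Proposition \ref{p1}, which already parametrises all transverse tensors, and to implement the traceless condition (\ref{2}) on that parametrisation. Contracting (\ref{4}) with $g_{ij}$ and using the Riemann symmetries of $R^{ikjp}$ identifies $g_{ij}R^{ikjp}$ with the ``Ricci'' contraction $R^{kp}$ introduced after (\ref{16a}), so (\ref{2}) is equivalent to equation (\ref{17}). The whole problem then reduces to solving (\ref{17}) in terms of potentials.

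For $D=2$, I would start from the explicit formula (\ref{9b}). The trace is
$g_{ij}T^{ij}=g_{ij}\epsilon^{ik}\epsilon^{jp}R_{,kp}$, and the standard two-dimensional identity $g_{ij}\epsilon^{ik}\epsilon^{jp}=g^{kp}$ for a constant flat metric collapses this to $\triangle R$. Hence tracelessness of $T^{ij}$ is exactly $\triangle R=0$, giving (\ref{16b}). (The residual gauge (\ref{9a}) for $R$ from Proposition \ref{p2} is harmonic, so the reduction is consistent.)

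For $D\ge 3$, I would show that a symmetric tensor $R^{pk}$ satisfies (\ref{17}) if and only if it admits a potential $S^{pkr}=-S^{prk}$ with $R^{pk}=S^{(pk)r}_{\ \ \ \ ,r}$. The easy direction: substituting the ansatz into the double divergence gives terms of the form $S^{pkr}_{\ \ \ ,rkp}$, and since the derivative indices $_{,rk}$ (resp.\ $_{,rp}$) are symmetric while $S$ is antisymmetric in its last two indices, each such term vanishes by the standard antisymmetric-symmetric contraction argument. For the converse, apply the Poincar\'e lemma (\ref{0}) twice. Writing (\ref{17}) as $(R^{kp}_{\ \ ,k})_{,p}=0$ and treating $A^{p}:=R^{kp}_{\ \ ,k}$ as a divergence-free vector, (\ref{0}) with $p=1$ (and empty $I$) yields an antisymmetric $B^{pr}=-B^{rp}$ with $R^{kp}_{\ \ ,k}=B^{pr}_{\ \ ,r}$. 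Then for each fixed $p$ one has $(R^{pk}-B^{pk})_{,k}=0$, and a second application of (\ref{0}) with $I=p$ produces $F^{pkr}$, antisymmetric in $kr$, satisfying $R^{pk}-B^{pk}=F^{pkr}_{\ \ \ ,r}$. Finally I would exploit the symmetry of $R^{pk}$ under $p\leftrightarrow k$: symmetrising both sides of this identity kills the antisymmetric piece $B^{pk}$, leaving $R^{pk}=\tfrac12(F^{pkr}+F^{kpr})_{,r}=F^{(pk)r}_{\ \ \ \ ,r}$, so $S:=F$ realises (\ref{21}).

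The main obstacle is the recombination of the two Poincar\'e steps. The first application of (\ref{0}) produces an unavoidable antisymmetric ``integration constant'' $B^{pk}$ that does not obviously fit into a divergence of a tensor with the claimed symmetry type. The key observation is that one need not massage $B^{pk}$ into a divergence of $F$ by hand; rather, the symmetry of $R^{pk}$ itself projects $B^{pk}$ out upon symmetrisation, so both contributions are absorbed into the single symmetrised divergence $F^{(pk)r}_{\ \ \ ,r}$ without any additional potential.
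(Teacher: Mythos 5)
Your proposal is correct and follows essentially the same route as the paper: reduce the traceless condition to $R^{kp}_{\ \ ,kp}=0$, apply the Poincar\'e lemma twice (your $B^{pr}$ and $F^{pkr}$ are the paper's $V^{pk}$ and $S^{pkr}$), and let the symmetrization over $pk$ annihilate the antisymmetric integration constant, yielding (\ref{21}). The $D=2$ case likewise matches, since $g_{ij}\epsilon^{ik}\epsilon^{jp}=g^{kp}$ is just the trace computation the paper performs via $R^{ij}=\tfrac12 Rg^{ij}$.
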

 \begin{proof} 
  For $D=2$ there is  $R^{ij}=\frac 12Rg^{ij}$, so (\ref{17}) reduces to equation (\ref{16b}), which can be further integrated by means of holomorphic functions (for a definite signature of 
 $g_{ij}$) or functions of null coordinates (for mixed signature).

 For $D\geq 3$ there is no other restrictions on  $R^{pk}$ except (\ref{17}). Integrating (\ref{17}) with use of  (\ref{0}) yields
 \be\label{18}
 R^{pk}_{\ \ ,k}=V^{pk}_{\ \ ,k}\ ,\ \ \ V^{pk}=-V^{kp}
 \ee
 and integrating (\ref{18}) leads to
  \be\label{19}
 R^{pk}=V^{pk}+S^{pkr}_{\ \ \ ,r}\ ,
 \ee
 where new potentials $ S^{pkr}$ satisfy (\ref{19a}).
 Symmetrization of (\ref{19}) over indices $pk$ yields (\ref{21}).
 
 \end{proof}
 Potentials $S^{ijk}$ are not uniquely defined. Their arbitrariness can be easily defined in the case of gauge  $C^{ijkp}=0$.
 \begin{prop}\label{p5}
  If (\ref{16c}) and (\ref{21}) are satisfied then  potentials $S^{ijk}$ undergo gauge transformations
 \be\label{19b}
 S^{ijk}\longrightarrow S^{ijk}-2g^{i[j}\xi^{k]}+\chi^{ijkr}_{\ \ \ \ ,r}+\eta^{ijk}\ ,
 \ee
 where
 \be\label{19e}
 \chi^{ijkr}=\chi^{i[jkr]}\ ,\ \ \eta^{ijk}=\eta^{[ijk]}\ .
 \ee
 \end{prop}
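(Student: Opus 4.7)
My plan is to proceed in two halves. First I verify that each of the three summands in (\ref{19b}) is a genuine gauge transformation of $S^{ijk}$. Second, I argue completeness by peeling off these three degrees of freedom in succession. For the verification I substitute each term into $R^{pk} = S^{(pk)r}_{,r}$. The totally antisymmetric $\eta^{ijk}$ satisfies $\eta^{(ij)k} = 0$ algebraically, so it drops out. The term $\chi^{ijkr}_{,r}$ contributes $\chi^{(ij)kr}_{,rk}$, which vanishes because $\chi$ is antisymmetric in $(k,r)$ while $\partial_k\partial_r$ is symmetric (both summands of the symmetrization over $ij$ contract an antisymmetric pair with a symmetric pair). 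For the term $-2g^{i[j}\xi^{k]}$ a short calculation, using constancy of $g^{ij}$, produces a shift $R^{ij}\to R^{ij}+\xi^{(i,j)}-\xi^k_{,k}g^{ij}$, which is precisely the raised form of the $\xi$-gauge (\ref{16d}) of Proposition~\ref{p3}.

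For completeness I would take two potentials $S^{ijk}$, $S'^{ijk}$ satisfying (\ref{19a}) whose associated $R^{pk}$'s differ by the gauge (\ref{16d}), and set $\Delta S^{ijk}:=S'^{ijk}-S^{ijk}$. Then $\Delta S^{ijk}=-\Delta S^{ikj}$ and $\Delta S^{(ij)k}_{,k}=\xi^{(i,j)}-\xi^k_{,k}g^{ij}$. Subtracting off $-2g^{i[j}\xi^{k]}$ reduces me to the homogeneous situation $\Delta S^{(ij)k}_{,k}=0$. I next peel off the totally antisymmetric part $\eta^{ijk}:=\Delta S^{[ijk]}$; the residual tensor $\hat S^{ijk}$ inherits $\hat S^{ijk}=-\hat S^{ikj}$ and $\hat S^{(ij)k}_{,k}=0$, and additionally $\hat S^{[ijk]}=0$. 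It then remains to show that every such $\hat S$ has the form $\chi^{ijkr}_{,r}$ with $\chi^{i[jkr]}=\chi^{ijkr}$.

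For this final step I would imitate the iterated application of the Poincar\'e lemma (\ref{0}) used in the proof of Proposition~\ref{p2}. Set $B^{ij}:=\hat S^{ijk}_{,k}$; by hypothesis $B^{ij}=-B^{ji}$, and $B^{ij}_{,j}=0$ holds automatically from the antisymmetry of $\hat S$ in $(j,k)$ contracted with the symmetric $\partial_j\partial_k$. A first application of (\ref{0}) yields $B^{ij}=W^{ijr}_{,r}$ with $W^{ijr}=-W^{irj}$; hence $(\hat S^{ijk}-W^{ijk})_{,k}=0$, and a second application delivers $\hat S^{ijk}-W^{ijk}=\psi^{ijkr}_{,r}$ with $\psi^{ijkr}=-\psi^{ijrk}$. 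The main obstacle I anticipate is packaging $W^{ijk}$ and $\psi^{ijkr}$ into a single potential whose last three indices are totally antisymmetric. I plan to achieve this by forming a suitable linear combination of permutations of $\psi$ over its last three indices, in the same spirit as the passage from (\ref{15}) to (\ref{16}) in the proof of Proposition~\ref{p2}, using the cyclic identity $\hat S^{[ijk]}=0$ and the antisymmetry of $B^{ij}$ to supply the algebraic constraints that let $W$ be absorbed into the redefined $\chi$.
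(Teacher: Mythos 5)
Your verification half is fine, and your reduction --- absorbing the $\xi$-term as the image of the gauge (\ref{16d}) of $R_{ij}$, then splitting off $\eta^{ijk}:=\Delta S^{[ijk]}$ so that the residue $\hat S^{ijk}$ satisfies $\hat S^{ijk}=-\hat S^{ikj}$, $\hat S^{[ijk]}=0$ and $\hat S^{(ij)k}_{\ \ \ \ ,k}=0$ --- is consistent with the structure of the paper's argument. The gap is in the final step, which is the actual content of the proposition and which you leave as a plan; moreover your route there is set up in a way that makes it hard to complete. Writing $B^{ij}=\hat S^{ijk}_{\ \ \ ,k}=W^{ijr}_{\ \ \ ,r}$ is essentially vacuous: $W=\hat S$ is itself an admissible choice, so the ensuing decomposition $\hat S^{ijk}=W^{ijk}+\psi^{ijkr}_{\ \ \ \ ,r}$ carries no information until $W$ is pinned down (for instance taken totally antisymmetric, using (\ref{0}) with $p=2$ on the antisymmetric divergence-free $B^{ij}$, which you do not invoke). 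Worse, the potential $\psi^{ijkr}$ you obtain is only antisymmetric in its last two indices and has no symmetry relating $i$ and $j$; antisymmetrizing it over the last three indices, as you propose in order to manufacture $\chi^{ijkr}=\chi^{i[jkr]}$, has no reason to reproduce $\hat S^{ijk}$ after taking the divergence, and the ``suitable linear combination of permutations'' is precisely the step you have not exhibited.

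The paper avoids this by integrating the hypothesis directly: applying (\ref{0}) to $S'^{(ij)k}_{\ \ \ \ \ ,k}=0$ gives $S'^{(ij)k}=V^{ijkr}_{\ \ \ \ ,r}$ with $V^{ijkr}=V^{(ij)[kr]}$, i.e.\ a potential that is \emph{symmetric} in $ij$ and antisymmetric in $kr$. It is exactly this $(ij)$-symmetry, combined with the algebraic identity $S'^{ijk}=S'^{[ijk]}+\frac23 S'^{(ij)k}-\frac23 S'^{(ik)j}$ (valid for any tensor antisymmetric in its last two indices, and which also underlies your peeling off of $\eta$), that makes the choice $\chi^{ijkr}=\frac38 V^{i[jkr]}$ close up, the leftover total antisymmetrization $-\chi^{[ijk]r}_{\ \ \ \ \ ,r}$ being dumped into $\eta^{ijk}$ as in (\ref{20a}). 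If you want to rescue your version: take $W$ totally antisymmetric, observe that then $\psi^{(ij)kr}_{\ \ \ \ \ ,r}=\hat S^{(ij)k}$, and set $V^{ijkr}:=\psi^{(ij)kr}$ --- you land back at the paper's starting point, so the detour through $B^{ij}$ buys nothing, and the decisive algebraic identity still has to be proved.
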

 \begin{proof}
 In order to prove  (\ref{19b}) one should solve condition
 \be\label{19c}
 S'^{(ij)k}_{\ \ \ \ \ ,k}=0
 \ee
 for an addition to   $S^{ijk}$ which does not influence $R^{ij}$ given by (\ref{21}).
 From (\ref{19c}) and (\ref{0}) it follows that
 \be\label{19d}
 S'^{(ij)k}=V^{ijkr}_{\ \ \ \ \ ,r}\ ,
 \ee
 where $V^{ijkr}=V^{(ij)[kr]}$. If we define $\chi^{ijkr}=\frac 38V^{i[jkr]}$ then $V^{ijkr}=4\chi^{(ij)kr}$. Taking into account the identity
 \be
 S'^{ijk}=S'^{[ijk]}+\frac 23S'^{(ij)k}-\frac 23S'^{(ik)j}
 \ee
 one obtains
 \be\label{20a}
 S'^{ijk}=\chi^{ijkr}_{\ \ \ \ ,r}-\chi^{[ijk]r}_{\ \ \ \ \ ,r}+S'^{[ijk]}\ .
 \ee
 Denoting two last expressions in (\ref{20a}) by $\eta^{ijk}$ leads to  transformation (\ref{19b}). 
 
 \end{proof}
 An application of  transformation (\ref{19b})  with $\xi^k=-S_i^{\ ik}/(D-1)$ and vanishing all other terms allows to obtain
  $S_i^{\ ik}=0$ and  $R=0$. Then substituting (\ref{21}) into  (\ref{16c}) yields 
  \be\label{19ba}
 T_{ij}=a(\bigtriangleup S_{(ij)\ ,k}^{\ \ \ k}+ S^{kr}_{\ \ (i,j)kr})\ .
 \ee
 An appropriate choice of  $\eta$ allows to get    $S^{[ijk]}=0$.
 Further gauge conditions can be assumed thanks to free functions  $\chi^{ijkr}$. 
 
 A description of TT tensors in dimension $D=3$ is much simpler than in $D\geq 4$. In this case Propositions \ref{p4} and \ref{p5} lead to the following result.
  \begin{prop}\label{p6}
  In dimension $D=3$ every TT tensor is given by
  \be\label{26}
 T^{ij}=\epsilon^{kl(i}(\bigtriangleup A_{\ k}^{j)}-A_{kp,}^{\ \ \ j)p})_{,l}\ ,
 \ee
 where $A_{ij}$ is a symmetric tensor undergoing the gauge transformations
  \be\label{26a}
 A_{ij}\longrightarrow A_{ij}+\chi_{(i,j)}+\eta g_{ij}
\ee
 with arbitrary functions $\chi_i$ and $\eta$.
  \end{prop}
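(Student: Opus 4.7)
The plan is to specialize Propositions \ref{p4} and \ref{p5} to $D=3$, exploiting the Hodge duality that is available only in three dimensions.

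First, using the $\xi$-part of the gauge freedom in (\ref{19b}) I would enforce $S_i^{\ ik}=0$ (equivalently $R=0$), so that (\ref{19ba}) with $a=1$ reduces the TT tensor to $T_{ij} = \bigtriangleup S_{(ij),k}^{\ \ \ \ k} + S^{kr}_{\ \ (i,j)kr}$. Next I would exploit that in $D=3$ a tensor antisymmetric in two indices is dual to one with one fewer index: set $S_{ijk}=\epsilon_{jkl}A_i^{\ l}$, equivalently $A_{il}=\tfrac12 \epsilon_l^{\ jk}S_{ijk}$. A short $\epsilon\epsilon$-contraction shows that the antisymmetric part $A_{[ij]}$ is the Hodge dual of the trace vector $S_i^{\ ik}$, so after the gauge fixing $A$ is automatically symmetric. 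Substituting the duality into the displayed formula and commuting $\epsilon$ past the derivatives yields, after $i,j$-symmetrization, exactly the expression (\ref{26}).

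It then remains to identify the residual gauge acting on $A_{ij}$. The totally antisymmetric term $\eta^{ijk}$ of (\ref{19b}) is forced to be a multiple of $\epsilon^{ijk}$ in $D=3$, and dualization turns it into the shift $A_{ij}\to A_{ij}+\eta\,g_{ij}$. The piece $\chi^{i[jkr]}$, again because $D=3$ admits a single fully antisymmetric $3$-tensor, takes the form $\chi^i \epsilon^{jkr}$, so $\chi^{ijkr}_{\ \ \ \ ,r}$ dualizes to $A_{ij}\to A_{ij}+\chi_{i,j}$. The residual $\xi$-freedom that still preserves $S_i^{\ ik}=0$ can be used to absorb the antisymmetric part $\chi_{[i,j]}$, leaving precisely the symmetric shift $A_{ij}\to A_{ij}+\chi_{(i,j)}$ of (\ref{26a}). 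A dimension count (symmetric $A_{ij}$: $6$ components; gauges $\chi_i$ and $\eta$: $3+1$ functions; net $2$ physical degrees of freedom, matching the $2$ d.o.f.\ of a TT tensor in $D=3$) serves as a quick sanity check.

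The main obstacle I foresee is this last bookkeeping step: carefully combining $\eta^{ijk}$, $\chi^{ijkr}$, and the residual $\xi$-freedom under the Hodge dualization into exactly $\chi_{(i,j)}+\eta\,g_{ij}$, without leaving extra freedom or over-constraining $A$. The substitution step producing (\ref{26}) itself is a direct, if index-heavy, computation.
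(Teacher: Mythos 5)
Your proposal is correct and follows essentially the same route as the paper: specialize Propositions \ref{p4} and \ref{p5} to $D=3$, dualize the antisymmetric pair via $S^{pkr}=\epsilon^{krl}A^p_{\ l}$ (so that $S_i^{\ ik}=0$ is exactly symmetry of $A$), substitute into the gauge-fixed transverse formula to get (\ref{26}), and then dualize $\chi^{ijkr}=\chi^i\epsilon^{jkr}$, $\eta^{ijk}=\eta\epsilon^{ijk}$ with the compensating $\xi^k=\tfrac12\epsilon^{kij}\chi_{[i,j]}$ to obtain (\ref{26a}). The only cosmetic difference is that the paper phrases the substitution through the Einstein-tensor form $T^{ij}=-\epsilon^{ikl}\epsilon^{jps}G_{ls,kp}$ rather than through (\ref{19ba}).
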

\begin{proof}
 For $D=3$ one has $C^{ikjp}=0$ and
 \be\label{22}
 R^{ikjp}=-\epsilon^{ikl}\epsilon^{jps}G_{ls}\ ,
 \ee
 where $G_{ls}=R_{ls}-\frac 12 R g_{ls}$ corresponds to the Einstein tensor. Substituting (\ref{22}) to (\ref{4}) leads to the following simplified version of formula (\ref{16c}) 
 \be\label{22a}
 T^{ij}=-\epsilon^{ikl}\epsilon^{jps}G_{ls,kp}\ .
 \ee
 Gauge transformation (\ref{16d}) is now the most general since 
 \be\label{30}
 \xi^{ikjpr}=-\xi_l\epsilon^{lik}\epsilon^{jpr}
\ee 
 should be substituted into  (\ref{10}). If $T^i_{\ i}=0$ then $R^{pk}$ is given by  (\ref{21}), where
 \be\label{24}
 S^{pkr}=\epsilon^{krl}A^p_{\ l}\ .
 \ee
 Tensor $A$ can be arbitrary, but only its symmetric part appears in $G_{ls}$. Without loss of generality we can assume 
 \be\label{25}
 A^{lp}=A^{pl}\ ,
 \ee
 hence  the gauge condition $S_i^{\ ik}=0$ is satisfied.
 Substituting (\ref{24}) to (\ref{21}) and using (\ref{22a}) leads to (\ref{26}).
 
 Tensor $A$  can be further simplified by means of (\ref{19b}). Since for $D=3$
 \be\label{25a}
 \chi^{ijkr}=\chi^i\epsilon^{jkr}\ ,\ \ \eta^{ijk}=\eta\epsilon^{ijk}
 \ee
 condition $S_i^{\ ik}=0$ is preserved if
 \be\label{27}
 \xi^k=\frac 12\epsilon^{kij}\chi_{[i,j]}\ .
 \ee
Substituting (\ref{24}), (\ref{25a}) and (\ref{27}) to  (\ref{19b}) yields (\ref{26a}).

\end{proof}

\noindent
\textbf{Remark:}
If $T'^{ij}$ is a transverse tensor  then
 \be\label{31}
 T^{ij}=\epsilon^{kl(i}T'^{j)}_{\ \ k,l}
 \ee
 is a TT tensor. Case (\ref{26}) corresponds to
 \be\label{28}
 T'^{ij}=\bigtriangleup A^{ij}-2A^{(i\ \ j)p}_{\ \ p,}+A^{pk}_{\ \ ,pk}g^{ij}\ .
 \ee
 
 If $D\geq 4$ one can consider gauge conditions for a TT tensor other than $C^{ijkr}= 0$. A natural candidate is  condition $R^{ij}=0$.
 \begin{prop}\label{p6}
  In dimension $D\geq 4$ every analytic TT tensor has the form
  \be\label{32}
 T^{ij}=C^{ikjp}_{\ \ \ \ ,kp}\ ,
 \ee
 where $C^{ikjp}$ is a tensor  with all algebraic properties of the Weyl tensor.
  \end{prop}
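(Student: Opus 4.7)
The plan is to mirror Proposition \ref{p3}, exchanging the roles of the two parts in the decomposition (\ref{16a}): rather than gauging the Weyl part away, gauge away the Ricci part. I would not try to construct the required element of the gauge group (\ref{10}) explicitly, but instead follow the Cauchy--Kowalewska strategy of Proposition \ref{p3} and prove directly that, given any analytic TT tensor $T^{ij}$, the overdetermined system
\begin{equation*}
T^{ij}=C^{ikjp}_{\ \ \ \ ,kp},
\end{equation*}
with $C^{ikjp}$ subject to all algebraic symmetries of the Weyl tensor (Riemann symmetries together with the trace-free conditions $C^{k}_{\ ikp}=0$), admits an analytic solution.

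Consistency is immediate: the Riemann symmetries of $C^{ikjp}$ force $T^{ij}_{\ \ ,j}=0$, while $C^{k}_{\ ikp}=0$ forces $g_{ij}T^{ij}=0$, so the TT hypothesis on $T^{ij}$ is necessary. Singling out a coordinate $x^1$ as the Cauchy direction and grouping the components of (\ref{32}) by how many indices equal $1$, I would identify the components $T^{11}$ and $T^{1a}$ ($a\ge 2$) as constraints that do not involve second derivatives of the propagated components of $C^{ikjp}$ in $x^1$, while the remaining components $T^{ab}$ form evolution equations. The argument then splits into three steps: (i) using the TT equations on $T^{ij}$ and the algebraic symmetries of $C^{ikjp}$, show that the constraints are preserved under the evolution, exactly as in Proposition \ref{p3}; (ii) construct initial data on $x^1=0$ satisfying the constraints and all algebraic Weyl symmetries via repeated application of the Poincar\'e lemma (\ref{0}); (iii) invoke the Cauchy--Kowalewska theorem on the evolution system.

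The main obstacle I expect is step (ii): one must realize initial data that satisfies the constraint components of (\ref{32}) while respecting all trace-free relations on $C^{ikjp}$, and the dimension count of independent Weyl-tensor components against the number of constraints is less transparent than in Proposition \ref{p3}, where the unknown was a symmetric two-tensor. An alternative route, conceptually direct but possibly more cumbersome in detail, would be to establish solvability of the gauge transformation (\ref{10}) for the Ricci part: given the Ricci-type tensor $R^{ij}$ produced by Proposition \ref{p1} and subject to (\ref{17}), find a $\xi^{ikjpr}$ whose induced Ricci contribution (obtained by tracing the right-hand side of (\ref{10})) exactly cancels $R^{ij}$, again an analytic Cauchy--Kowalewska problem. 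Either route delivers the claimed Weyl-gauge representation of every analytic TT tensor.
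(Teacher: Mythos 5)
Your fallback route is the one the paper actually takes; your primary route is different and, as you yourself suspect, incomplete at exactly the point you flag. Concretely: in the direct system $T^{ij}=C^{ikjp}_{\ \ \ \ ,kp}$ the trace-free conditions on $C$ tie the components carrying second $x^1$-derivatives to the purely tangential ones (in Euclidean signature $C^{a1b1}=-g_{cd}C^{acbd}$), so the would-be evolution equations for $T^{ab}$ propagate only the tangential traces of $C^{acbd}$, while the trace-free tangential part must be prescribed freely and fed into the constraints coming from $T^{11}$ and $T^{1a}$; whether those constraints can then be solved on the initial surface and propagated is precisely your unproved step (ii), and you give no argument for it. As written, the primary route is a plan, not a proof.

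The paper follows your alternative: it gauges away the Ricci part using (\ref{10}). The induced change of $R^{kp}$ under (\ref{10}) depends only on the trace $\xi^{kpr}=\xi^{ik\ pr}_{\ \ i}$, so one solves the much smaller first-order system (\ref{33}), $\xi^{(kp)r}_{\ \ \ \ \ ,r}=-R^{kp}$, by Cauchy--Kowalewska: the $k=p=1$ component and its $x^1$-derivative (\ref{34}) are the constraints, and their propagation uses exactly $R^{kp}_{\ \ ,kp}=0$, i.e.\ the tracelessness of $T$ via (\ref{17}). Two ingredients of that argument are absent from your sketch and are worth making explicit. First, the lifting step: having found $\xi^{kpr}$ one must produce a full gauge tensor $\xi^{ikjpr}=\xi^{[ik][jpr]}$ with that prescribed trace; the paper does this via the decomposition (\ref{35}) into a traceless part plus trace terms, whose coefficients $\alpha=6/(D-3)$ and $\beta=6/((D-2)(D-3))$ exist only for $D\geq 4$ --- this is where the dimensional restriction actually enters. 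Second, the identification of which components of (\ref{33}) are constraints and the check that they are preserved by the evolution. With those two points supplied, your alternative route becomes the paper's proof.
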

\begin{proof}
 Formula (\ref{32}) is true if transformations (\ref{10}) allow to kill  $R^{ij}$. Let us look for solutions of equations
 \be\label{33}
\xi^{(kp)r} _{\ \ \ \ \ ,r}=-R^{kp}\ ,
 \ee
 where
 \be\label{33a}
 \xi^{kpr}=\xi^{ik\ pr}_{\ \ i}\ .
 \ee
 Let $x^1$ and $x^a$ be coordinates introduced in the proof of Proposition  \ref{p3}. Equation (\ref{33}) with $k=p=1$ is a constraint regarding evolution of
  $\xi^{kpr}$ with respect to $x^1$. The derivative of this constraint over $x^1$ leads to the new constraint
 \be\label{34}
 \xi^{(ab)1}_{\ \ \ \ \ ,ab}=R^{11}_{\ \ \ ,1}+2R^{1a}_{\ \ ,a}\ .
 \ee
 Condition (\ref{34}) is preserved if $T^i_{\ i}=0$ since then $R^{kp}_{\ \ ,kp}=0$. Thus, if we assume  (\ref{33}) with $k=p=1$ and (\ref{34}) on the surface  $x^1=0$ then, in analytic case,  we obtain solution of all equations  (\ref{33}). It remains to prove that for this solution there is a tensor  $\xi^{ikjpr}$ which satisfies (\ref{33a}). It is the case since every such tensor admits decomposition
 \be\label{35}
 \xi^{\ \ jpr}_{ik}=\hat\xi^{\ \ jpr}_{ik}+\alpha \delta^{[j}_{\ [i}\xi_{k]}^{\ pr]}-\beta \delta^{[j}_{\ [i}\delta_{k]}^{\ p}\xi^{r]}\ ,
 \ee
 where $\hat\xi^{\ \ jpr}_{ik}$ is traceless and
 \be
 \alpha=\frac{6}{D-3}\ ,\ \ \beta=\frac{6}{(D-2)(D-3)}\ ,\ \ \xi^r=\xi_k^{\ kr}\ .
 \ee
 It follows from  (\ref{35}) that tensors $\hat\xi^{ikjpr}$ and $\xi^{kpr}$ can be defined independently.

\end{proof}

 If $R^{ij}=0$ then gauge transformations  (\ref{10}) are restricted by\\ $\xi^{(kp)r} _{\ \ \ \ \ ,r}=0$. It follows from (\ref{19c})-(\ref{20a}) that solution of this condition is
 \be\label{36}
 \xi^{ijk}=\chi^{ijkr}_{\ \ \ \ ,r}+\eta^{ijk}\ ,
 \ee
 where tensors $\chi$ and $\eta$ satisfy (\ref{19e}). 
 Transformations  (\ref{10}) with condition (\ref{36}) can be further used to simplify formula (\ref{32}).
 
 \section{Discussion}
 All calculations were performed in  Cartesian coordinates $x^i$. If we replace  partial derivatives with respect to $x^i$ by covariant derivatives  $\nabla_k$ then we can pass to other systems of coordinates. In particular formula (\ref{4}) then takes the form
 \be\label{37}
 T^{ij}=\nabla_k\nabla_pR^{ikjp}\ .
 \ee
 For a nonflat metric expression (\ref{37})   fails to satisfy (\ref{1}) because of nonvanishing commutators of covariant derivatives. 
 
 Given solutions of equations (\ref{1}) and (\ref{2}) in a flat space we can obtain all TT tensors in conformally flat spaces by means of transformation (\ref{1a}).
 The author is not able to express in general these solutions in terms of covariant derivatives of potentials  with no use of the conformal factor $\psi$. The only exception is the case of spaces of constant curvature (space forms corresponding to flat space, sphere or hyperbolic space). Then
  \be\label{39}
 R^{ij}_{\ \ kp}=\lambda (\delta^i_{\ k}\delta^j_{\ p}-\delta^j_{\ k}\delta^i_{\ p})\ ,\ \ \lambda=const
 \ee
and equation (\ref{1}) is  still  satisfied if formula (\ref{37}) is replaced by
 \be\label{40}
 T^{ij}=-\nabla_k\nabla_p R^{k(ij)p}+\frac 12\lambda R^{ij}\ .
 \ee
 In order to satisfy $T^i_{\ i}=0$ it is sufficient, but may be not necessary, to assume the covariant version of (\ref{21})
 \be\label{21a}
 R^{pk}= \nabla_rS^{(pk)r}
 \ee
 together with 
 \be
 S^{ijk}=-S^{ikj}\ ,\ \ S_i^{\ ik}=0\ .
 \ee
 We expect that for $D\geq 4$ the gauge condition $C^{ijkl}=0$ is available for transverse tensors and condition $R^{ij}=0$ for TT tensors.

 Let us assume that space is flat but a TT tensor is invariant under a symmetry of metric. Such solutions can be obtained by means of invariant tensor potentials appearing in Propositions 1-7. Such description is not necessarily optimal as it is shown in the paper of 
  Conboye and \'{O} Murchadha \cite{cm} and Conboye \cite{c} in dimension $D=3$. Their expressions for  TT tensors contain only two arbitrary functions and at most  their second derivatives, not third as in 
  (\ref{26}).  It is difficult to say if a similar simplified description exists in higher dimensions.

\end{document}